\numberwithin{equation}{section}
\numberwithin{figure}{section}
\theoremstyle{plain}
\newtheorem{thm}{\protect\theoremname}
  \theoremstyle{plain}
  \theoremstyle{definition}
  \theoremstyle{plain}
  \newtheorem{lem}[thm]{\protect\lemmaname}
  \providecommand{\corollaryname}{Corollary}
  \providecommand{\definitionname}{Definition}
  \providecommand{\lemmaname}{Lemma}
\providecommand{\theoremname}{Theorem}
\begin{document}

\title{An upper bound on the $k$-modem illumination problem.}

\newcommand*\samethanks[1][\value{footnote}]{\footnotemark[#1]}

\author{Frank Duque, Carlos Hidalgo-Toscano}
\author{Frank Duque \thanks{Departamento de Matemáticas, Cinvestav, D.F. México, 
México. Partially supported by grant 153984 (CONACyT, Mexico).
\texttt{[frduque, cmhidalgo]@math.cinvestav.mx}} \protect\quad Carlos Hidalgo-Toscano 
\samethanks}

\maketitle

\begin{abstract}
A variation on the classical polygon illumination problem was introduced in 
[Aichholzer et. 
al. EuroCG'09]. In this variant light sources are replaced by wireless devices called $k$-modems, 
which can penetrate a fixed number $k$, of ``walls''. A point in the interior of a polygon is 
``illuminated'' by a $k$-modem if the line segment joining them intersects at most $k$ edges of the 
polygon. It is easy to construct polygons of $n$ vertices where the number of $k$-modems required 
to illuminate all interior points is $\Omega(n/k)$. However, no non-trivial upper bound is known. 
In this paper we prove that the number of $k$-modems required to illuminate any polygon of $n$ 
vertices is at most $O(n/k)$. For the cases of illuminating an orthogonal polygon or a set of 
disjoint orthogonal segments, we give a 
tighter bound of $6n/k+1$.  Moreover, we present an $O(n \log n)$ time algorithm to achieve this 
bound. 
\end{abstract}

\section{Introduction}

The classical art gallery illumination problem consists on finding the minimum number of light 
sources needed to illuminate a simple polygon. 
There exist several variations on this problem; one 
such variation was introduced in \cite{monotone}, it is known as the $k$-modem illumination 
problem. 
For a non-negative number $\ensuremath{k}$, a $\emph{\ensuremath{k}-modem}$ is a wireless device  
that can penetrate $\ensuremath{k}$ ``walls''. Let $\ensuremath{\mathcal{L}}$ be a set of 
$\ensuremath{n}$ line segments (or lines) in the plane. A $\ensuremath{k}$-modem 
$\emph{illuminates}$ all points $\ensuremath{p}$ of the plane such that the interior of the line 
segment joining $\ensuremath{p}$ and the $\ensuremath{k}$-modem intersects at most $\ensuremath{k}$
elements of $\ensuremath{\mathcal{L}}$. In general, $\ensuremath{k}$-modem illumination problems 
consist on finding the minimum number of $\ensuremath{k}$-modems necessary to illuminate a certain 
subset of the plane, for a given $\ensuremath{\mathcal{L}}$. Classical illumination 
\cite{sack1999handbook,o1987art} is just the case when $\ensuremath{k=0}$.


Several upper bounds have been obtained for various classes of $\mathcal{L}$. In \cite{monotone} 
the authors studied the case when $\ensuremath{\mathcal{L}}$ is the set of edges of a monotone 
polygon with $n$ vertices; they showed that the interior of the polygon can be illuminated with 
at most 
$\left \lceil\frac{n}{2k}\right \rceil$ $k$-modems ($\left \lceil\frac{n}{k+4}\right \rceil$ if 
$k=1,2,3)$, and if the polygon 
is orthogonal it can be illuminated with $\left \lceil \frac{n-2}{2k+4}\right \rceil $ 
$k$-modems. In 
\cite{ktransmiters} the authors studied the case when $\mathcal{L}$ is a set of $n$ disjoint 
orthogonal line segments; they showed that $\left \lceil 
\frac{n+1}{2(k+1)^{0.264}}\right \rceil $ 
$k$-modems are sufficient to illuminate the plane. 
In \cite{fabila2009modem} the authors studied the problem of illuminating the plane with few modems 
of 
high power; they showed that when $\mathcal{L}$ is an arrangement of lines, one 
$\left \lceil\frac{3n}{2}\right \rceil$-modem is sufficient, when $\mathcal{L}$ is the set of edges 
of an 
orthogonal polygon, one $\left \lceil\frac{n}{3}\right \rceil$-modem is sufficient and when 
$\mathcal{L}$ is the 
set of edges of a simple polygon, one $\left \lceil\frac{2n+1}{3}\right \rceil$-modem is 
sufficient. 
It is worth 
noting that there are no published bounds for general polygons.

There are also algorithmic results regarding simple polygons. In \cite{metaheuristic} the authors 
presented a hybrid metaheuristic strategy to find few $k$-modems that illuminate a simple polygon. 
In 
this case the $k$-modems are required to be placed at vertices of the polygon. They applied the 
hybrid metaheuristic to random sets of simple, monotone, orthogonal and grid monotone orthogonal 
polygons. Each set consisted of 40 polygons of 30, 50, 70, 100, 110, 130, 150 and 200 vertices.
The average numbers of $k$-modems used by their strategy are shown in Table \ref{table:averages}.

\begin{table}[h]
\centering
\begin{tabular}{ c c c c c }
	\toprule
	 & Simple & Monotone & Orthogonal & Grid Monotone Orthogonal \\
	\midrule
	$k=2$ & $\left \lceil \frac{n}{26} \right \rceil$ & $\left \lceil \frac{n}{15} \right 
\rceil$ & 
$\left \lceil \frac{n}{27} \right \rceil$ & $\left \lceil \frac{n}{18} \right \rceil$ \\
& & & & \\
	$k=4$ & $\left \lceil \frac{n}{52} \right \rceil$ & $\left \lceil \frac{n}{26} \right 
\rceil$ & 
$\left \lceil \frac{n}{57} \right \rceil$ & $\left \lceil \frac{n}{35} \right \rceil$ \\
	\bottomrule
\end{tabular}

\begin{centering}
\protect\caption{Averages on the number of $k$-modems obtained by the strategy presented in 
\cite{metaheuristic}. }
\label{table:averages}
\par\end{centering}

\end{table}

It is known that the problem of finding the minimum number of 0-modems to illuminate a simple 
polygon is NP-hard \cite{Aggarwal:1984:AGT:911725}. It was recently proved that the same problem is 
also NP-hard for $k$-modems {[}reference to the paper of Christiane{]}.

This paper is organized as follows. In Section \ref{SeccKmodemCutting} we present a new bound of
$O(n/k)$ for the $k$-modem illumination problem through a variation of the Cutting Lemma.
In Section \ref{sec:Orthogonal} we present a simple $O(n\log n)$ time algorithm that illuminates a 
set of $n$ disjoint orthogonal segments with at most $6\frac{n}{k}+1$ $k$-modems. This 
algorithm can be easily modified to obtain the same bound for orthogonal polygons. 

\section{\label{SeccKmodemCutting}Upper Bounds Using the Cutting Lemma}

A generalized triangle is the intersection of three half planes. Note that a generalized triangle 
can be a point, a line, a bounded or an unbounded region. Given a set $\mathcal{L}$ of $n$ lines in 
the plane and $r>0$, a $\frac{1}{r} $-cutting is a partition of the plane into 
generalized triangles with 
disjoint interiors, such that each generalized triangle is intersected by at most $n/r$ lines. The 
Cutting Lemma gives an upper bound on the size of such a cutting.

\begin{thm}[Cutting Lemma]
 Let $\mathcal{L}$ be a set of $n$ lines in the plane and $r>0$. Then there exists a 
$\frac{1}{r}$-cutting of 
size $O(r^{2})$. 
\end{thm}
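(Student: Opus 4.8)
The plan is to prove the Cutting Lemma via random sampling combined with a standard re-partitioning argument, the approach due to Clarkson and Haussler (and refined by Chazelle--Friedman). The key quantity to control is, for a subset $R\subseteq\mathcal{L}$, the number of triangles in the arrangement-based triangulation induced by $R$, together with the number of lines of $\mathcal{L}$ crossing each such triangle. First I would take a random sample $R\subseteq\mathcal{L}$ of size $r$ and form its arrangement $\mathcal{A}(R)$; triangulating each face of $\mathcal{A}(R)$ (for instance by the bottom-vertex triangulation) yields a collection of $O(r^2)$ generalized triangles, since $\mathcal{A}(R)$ has $O(r^2)$ vertices, edges, and faces. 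This gives a partition of the plane into the right number of pieces, so the only remaining issue is the crossing bound.

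The heart of the argument is to show that, with positive probability, every triangle $\Delta$ in the triangulation of $\mathcal{A}(R)$ is crossed by at most $O((n/r)\log r)$ lines of $\mathcal{L}$. For a single triangle determined by (at most) three lines of $R$, the probability that it survives in $\mathcal{A}(R)$ while being crossed by many lines decays geometrically, because a crossing line is likely to have been chosen into $R$ and thereby destroy the triangle. Summing this tail estimate over the $O(r^2)$ candidate triangles via a union bound shows that the event ``some triangle has more than $c(n/r)\log r$ crossings'' has probability less than $1$, so a good sample exists. This already yields a $\frac{1}{r}$-cutting of size $O(r^2)$ with crossing number $O((n/r)\log r)$; to remove the spurious logarithmic factor I would apply the standard bootstrapping step, recursively refining any triangle that is over-crossed by cutting it further, which only multiplies the size by a constant while bringing the per-triangle crossing count down to $n/r$.

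I expect the main obstacle to be the crossing-number tail estimate, i.e.\ proving that the probability of a fixed candidate triangle simultaneously appearing in $\mathcal{A}(R)$ and having at least $t\cdot(n/r)$ crossing lines decays fast enough in $t$ to survive the union bound over $O(r^2)$ triangles. This requires carefully setting up the probability space (sampling without replacement, or equivalently a sample of exactly $r$ lines) and bounding the relevant binomial-type ratios, which is where the $\log r$ factor naturally enters and must then be carefully eliminated in the refinement phase. The bookkeeping in the final refinement—verifying that repeatedly subdividing over-crossed triangles keeps the total count at $O(r^2)$ rather than blowing up—is the second delicate point, but it follows from a clean geometric-series argument once the probabilistic bound is in hand.
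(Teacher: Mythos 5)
The paper does not prove this theorem at all: it invokes the Cutting Lemma as a known result, with citations to the original papers, and only uses its statement (together with a segment version from the literature) to derive the $O(n^2/k^2)$ and $O(n/k)$ illumination bounds. So there is no proof in the paper to compare yours against; what you have written is an outline of the standard random-sampling proof of Clarkson and Chazelle--Friedman, which is indeed how the cited sources establish the result.

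As an outline it has the right ingredients, but the final step is stated in a way that would fail if executed literally. You first fix, via a union bound, a single sample $R$ for which every triangle of the triangulated arrangement $\mathcal{A}(R)$ is crossed by at most $O((n/r)\log r)$ lines, and then propose to refine the over-crossed triangles, claiming this ``only multiplies the size by a constant.'' It does not: a triangle crossed by $\Theta((n/r)\log r)$ lines needs a local $\frac{1}{\log r}$-cutting of the lines crossing it to bring its crossing number down to $n/r$, and any such cutting has size $\Omega(\log^{2}r)$ (the lines crossing the triangle generically have $\Theta(((n/r)\log r)^{2})$ intersections inside it, and each sub-triangle can absorb only $O((n/r)^{2})$ of them), so refining after committing to the worst-case log-factor bound gives $\Omega(r^{2}\log^{2}r)$ pieces in total. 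The correct bootstrapping does not go through the maximum-excess event alone: one proves that the \emph{expected number} of first-level triangles with excess at least $t$ (i.e., crossed by at least $t\cdot n/r$ lines) decays exponentially in $t$, refines each triangle according to its own excess $t_{\Delta}$ into $O\bigl(t_{\Delta}^{2}\log^{2}t_{\Delta}\bigr)$ sub-triangles, and sums the geometric series $\sum_{t\ge 1} O\bigl(r^{2}2^{-t}\bigr)\cdot O\bigl(t^{2}\log^{2}t\bigr)=O(r^{2})$ in expectation (then applies Markov's inequality to fix one good sample). You do gesture at exactly this geometric decay in your tail estimate, so the missing piece is available to you; but the two-phase structure as you describe it---first a log-factor cutting for a fixed good sample, then a constant-factor repair---conflates the weak union-bound statement (which controls only the maximum excess) with the stronger expectation bound on the whole distribution of excesses that the refinement actually requires.
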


The Cutting Lemma was first proved in \cite{cuttingLemma1} and independently in 
\cite{CuttingLemma2}. It has become a classical tool in Computational Geometry used mainly in 
divide-and-conquer algorithms. It can be used to obtain an upper bound on the number of $k$-modems 
necessary to illuminate the plane in the presence of $n$ lines.

\begin{thm}\label{thm:Lines-Mdems-Required}
 Let $\mathcal{L}$ be a set   of $n$ lines in the plane. The number   of $k$-modems required to 
 illuminate the plane in the presence of $\mathcal{L}$ is $O(n^2/k^2)$.
\end{thm}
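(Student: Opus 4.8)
The plan is to apply the Cutting Lemma directly, choosing the parameter $r$ so that each generalized triangle in the cutting is crossed by at most $k$ lines, and then to place a single $k$-modem in each triangle. The key geometric observation is that a generalized triangle is convex, being an intersection of three half-planes, so any segment between two of its points stays inside it and therefore crosses only lines that actually intersect the triangle.

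First I would dispose of the trivial case: if $k \ge n$, a single modem illuminates the whole plane, so assume $k < n$. I would then invoke the Cutting Lemma with $r = n/k$. This produces a partition of the plane into $O(r^2) = O(n^2/k^2)$ generalized triangles with disjoint interiors, each intersected by at most $n/r = k$ lines of $\mathcal{L}$. In each generalized triangle $T$ I would place one $k$-modem (at an interior point when $T$ is full-dimensional).

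The core step is to verify that the modem in $T$ illuminates all of $T$. Let $p \in T$ and let $q$ be the modem. By convexity the segment $\overline{qp}$ lies in $T$, so any line crossed by this segment must intersect $T$. Since at most $k$ lines of $\mathcal{L}$ intersect $T$, the segment crosses at most $k$ lines, and hence $p$ is illuminated by definition. As the generalized triangles cover the plane, every point lies in some $T$ and is illuminated by the corresponding modem. Summing over all triangles, the total number of $k$-modems used is $O(n^2/k^2)$, as claimed.

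I expect the main subtlety, rather than a genuine obstacle, to be the bookkeeping for points lying on shared triangle boundaries and the ``interior of the segment'' clause in the definition of illumination: a boundary point of $T$ is still in the closure of $T$, and the open segment from an interior modem to it remains in the interior of the convex region, so the same counting of crossed lines applies. Degenerate triangles (points, segments, or lower-dimensional regions) are handled by the same convexity argument, since they too are convex. No step beyond the Cutting Lemma and elementary convexity is required, which is what makes the bound fall out cleanly.
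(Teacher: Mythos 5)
Your proposal is correct and follows exactly the paper's argument: apply the Cutting Lemma with $r = n/k$ to obtain $O(n^2/k^2)$ generalized triangles each crossed by at most $k$ lines, and place one $k$-modem per triangle. The convexity justification you spell out (the segment from the modem to any point of the triangle stays inside it, so it crosses only lines meeting the triangle) is precisely the step the paper leaves implicit in the phrase ``each triangle can be illuminated with one $k$-modem.''
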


\begin{proof}
  By the Cutting Lemma, there exists a $\frac{1}{(n/k)}$-cutting for $\mathcal{L}$ of size 
$O(n^2/k^2)$. Note that each triangle can be illuminated with one $k$-modem. Thus the number of 
$k$-modems required to illuminate the plane is $O(n^2/k^2)$.
\end{proof}

Given a polygon $\mathcal{P}$ we can obtain the same 
bound on the number of $k$-modems needed to illuminate it. We first  extend its edges to straight 
lines and then apply Theorem \ref{thm:Lines-Mdems-Required}. 
We can achieve a  better upper bound  of $O(n/k)$ by making use of a line segment version of the 
Cutting 
Lemma given in \cite{berg1995cuttings}. In that paper the authors consider cuttings in a more 
general setting: they define a cutting as a subdivision of the plane in \emph{boxes}. A \emph{box} 
is a closed subset of the plane which has constant description (that is, it can be represented in a 
computer with $O(1)$ space, and it can be checked in constant time whether a point lies in a box or 
whether an object intersects (the interior of) a box).

\begin{thm} \label{thm:CuttingLemmaSegments} \cite{berg1995cuttings}
Let $\mathcal{L}$ be a set of $n$ line segments in the plane with a total of $A$ intersections and 
$r>0$. Then there exists a $\frac{1}{r}$-cutting for $\mathcal{L}$ of size $O\left(r +A 
\left(\frac{r}{n}\right)^2\right)$.
\end{thm}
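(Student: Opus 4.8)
The plan is to prove this through the random-sampling paradigm of Clarkson and Shor, arranged so that the final size bound becomes sensitive to the intersection count $A$. First I would draw a random subset $R \subseteq \mathcal{L}$ of $r$ segments and form the vertical (trapezoidal) decomposition of the arrangement induced by $R$; in the box framework described above, each cell is a box of constant description, bounded above and below by portions of sampled segments and on the left and right by vertical walls erected through endpoints or intersection points. Each such box is therefore \emph{defined} by at most four segments of $R$. This decomposition is the candidate $\frac{1}{r}$-cutting, and there are two things to establish: that it has the claimed size, and that after a refinement step every box is crossed by at most $n/r$ segments.

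For the size bound I would count boxes via the vertices of the arrangement of $R$, which arise from two sources: the $O(r)$ endpoints of the sampled segments, and the intersection points among sampled segments. Since each of the $A$ intersection points of $\mathcal{L}$ lies on exactly two segments, it survives into the arrangement of $R$ precisely when both of those segments are sampled, an event of probability $\Theta\left((r/n)^2\right)$. By linearity of expectation the expected number of surviving intersections is $O\left(A(r/n)^2\right)$, so the expected number of boxes is $O\left(r + A(r/n)^2\right)$, matching the target. The Clarkson--Shor bound upgrades this so as to control not merely the total number of boxes but the number of boxes at each conflict level, which is exactly the refined information the next step requires.

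The remaining and most delicate step is to guarantee that \emph{every} box is crossed by at most $n/r$ segments, rather than an expected or merely high-probability bound. A single sample yields, through the $\epsilon$-net theorem, boxes crossed by $O\left((n/r)\log r\right)$ segments, producing a cutting that overshoots the target size by a logarithmic factor; I expect this logarithmic overshoot to be the main obstacle. The standard remedy is the hierarchical (iterated) cutting construction: instead of sampling once at ratio $r/n$, I would perform $O(\log r)$ rounds of sampling at a fixed constant ratio, each round locally re-cutting, inside every box of the previous level, only the segments that cross that box. At each level the guaranteed crossing number drops by a constant factor, and applying the Clarkson--Shor estimate level by level keeps the number of boxes geometrically controlled, so the total size telescopes to $O\left(r + A(r/n)^2\right)$ with no spurious $\log r$. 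The one piece of bookkeeping that needs genuine care is tracking $A$ through the recursion: one must check that a box crossed by $m$ segments contributes a local term of order $O\left((mr/n)^2\right)$ and that these contributions sum across the hierarchy to the claimed intersection-sensitive bound.
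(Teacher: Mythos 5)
You cannot be compared against the paper here, because the paper contains no proof of this statement: Theorem~\ref{thm:CuttingLemmaSegments} is imported verbatim from de Berg and Schwarzkopf \cite{berg1995cuttings} and used as a black box to derive Theorem~\ref{thm:CuttingLemmaPolygons}. Judged on its own merits, your outline is the standard and essentially correct route to intersection-sensitive cuttings: sample $r$ segments, take the vertical trapezoidal decomposition (each cell defined by $O(1)$ segments, so it is a box in the paper's sense), observe that an intersection of $\mathcal{L}$ survives into the sample with probability $\Theta\left((r/n)^2\right)$ so the expected decomposition size is $O\left(r+A\left(\frac{r}{n}\right)^2\right)$, and then refine to remove the logarithmic slack in the crossing number that a single $\varepsilon$-net-based sample leaves. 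Two remarks on the last step, which you rightly identify as the delicate one. First, the literature (following Chazelle--Friedman, which is also the engine in the cited reference) usually removes the logarithm with a \emph{two-level} construction rather than an $O(\log r)$-level hierarchy: inside each trapezoid $\Delta$ of the first sample crossed by $t_\Delta\cdot n/r$ segments one builds a local $\left(1/t_\Delta\right)$-cutting, tolerating polylogarithmic slack in its size, and the Clarkson--Shor moment bound (the expected sum of $t_\Delta^{c}$ over cells is proportional to the expected number of cells, for any constant $c$) absorbs that slack; this is simpler to make rigorous than a full hierarchy. Second, your version does work, and the bookkeeping resolves exactly as you suspect: only cells crossed by at least $n/c^{i+1}$ segments need refinement at level $i$, the intersections lying in distinct cells are disjoint, so the local intersection terms at level $i$ sum to $O\left(A\left(c^{i}/n\right)^2\right)$ up to a constant depending on $c$, and the resulting geometric series over levels is dominated by its last term, giving $O\left(r+A\left(\frac{r}{n}\right)^2\right)$ overall. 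So your proposal is a correct proof sketch whose only gap is that this level-by-level estimate is asserted rather than carried out.
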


If we consider a polygon as a set of $n$ line segments that intersect only at their endpoints, 
Theorem 
\ref{thm:CuttingLemmaSegments} gives us a $\frac{1}{(n/k)}$-cutting of size 
$O\left(\frac{n}{k}+n\left(\frac{n^2}{k^2}\frac{1}{n^2}\right)\right)=O\left(\frac{n}{k}\right)$. 
Taking into account that the boxes 
used in the proof of Theorem \ref{thm:CuttingLemmaSegments} are generalized trapezoids (that is, 
intersections of four half planes), and that it is possible to illuminate each trapezoid with a 
$k$-modem, we obtain Theorem \ref{thm:CuttingLemmaPolygons}. Note that the same reasoning applies to
sets of segments in the plane, which gives a bound of $O(n/k +A/k^2)$ for that case.

\begin{thm}\label{thm:CuttingLemmaPolygons}
Let $\mathcal{P}$ be a polygon with $n$ vertices. The number of $k$-modems needed to illuminate 
$\mathcal{P}$ is at most $O\left(n/k\right)$.
\end{thm}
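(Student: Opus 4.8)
The plan is to reduce Theorem~\ref{thm:CuttingLemmaPolygons} to the segment version of the Cutting Lemma (Theorem~\ref{thm:CuttingLemmaSegments}), essentially following the computation already sketched in the paragraph preceding the statement. First I would regard the polygon $\mathcal{P}$ as a collection of $n$ line segments, namely its $n$ edges. Since the polygon is simple, consecutive edges meet only at shared vertices and non-consecutive edges do not cross, so the total number of pairwise intersections among these segments is $A = O(n)$ (in fact at most $n$, counting the shared endpoints). This bound on $A$ is what makes the segment version strictly better than the line version used in Theorem~\ref{thm:Lines-Mdems-Required}.

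Next I would apply Theorem~\ref{thm:CuttingLemmaSegments} with $r = n/k$. Substituting $A = O(n)$ and $r = n/k$ into the size estimate $O\!\left(r + A(r/n)^2\right)$ gives
\begin{equation}
O\!\left(\frac{n}{k} + n\cdot\frac{1}{k^2}\right) = O\!\left(\frac{n}{k}\right),
\end{equation}
since $n/k^2 \le n/k$ for $k \ge 1$. This produces a $\tfrac{1}{(n/k)}$-cutting consisting of $O(n/k)$ boxes, where each box is intersected by at most $n/(n/k) = k$ of the edge-segments of $\mathcal{P}$.

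The crucial structural step is to argue that each box can be illuminated by a single $k$-modem. Here I would invoke the fact, noted in the excerpt, that the boxes produced in the proof of Theorem~\ref{thm:CuttingLemmaSegments} are generalized trapezoids, i.e.\ intersections of four half planes. The key observation is that a generalized trapezoid $B$ is convex, so for any point $q$ placed inside $B$ and any target point $p$ in the interior of $\mathcal{P}$ that lies in $B$, the segment $\overline{qp}$ stays within $B$; hence $\overline{qp}$ can only cross edges of $\mathcal{P}$ that already intersect $B$, of which there are at most $k$. Therefore a single $k$-modem placed anywhere in $B$ illuminates all interior points of $\mathcal{P}$ that belong to $B$. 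I would place one $k$-modem in each box, giving $O(n/k)$ modems total that jointly illuminate every interior point of $\mathcal{P}$.

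The main obstacle I anticipate is the illumination step for a box, and in particular handling the segments that intersect the box only on its \emph{boundary} rather than crossing its interior. A point $p$ on a shared edge of two adjacent boxes, or a line of sight $\overline{qp}$ that grazes a box boundary, needs careful treatment so that each crossing is counted consistently and the bound of $k$ crossings per line of sight is not exceeded; convexity of the generalized trapezoid is the tool that makes this work, but one must verify that the relevant definition of ``illuminates'' (intersecting at most $k$ elements of $\mathcal{L}$ in the interior of the connecting segment) is respected. Assuming this boundary bookkeeping goes through as expected, the counting and the substitution into Theorem~\ref{thm:CuttingLemmaSegments} are routine and yield the claimed $O(n/k)$ bound.
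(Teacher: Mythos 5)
Your proposal is correct and follows essentially the same route as the paper, whose proof is precisely the computation in the paragraph preceding the theorem: treat the polygon's edges as $n$ segments with $A=O(n)$ intersections, apply Theorem~\ref{thm:CuttingLemmaSegments} with $r=n/k$ to get a cutting of size $O(n/k)$, and use the convexity of the generalized trapezoids to illuminate each box with one $k$-modem. Your added care about boundary bookkeeping is a reasonable refinement of the paper's (terser) argument, not a departure from it.
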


\section{\label{sec:Orthogonal}An Algorithm for Orthogonal Line Segments Illumination}

In this section we present an $O(n \log n)$ time algorithm  to illuminate the plane 
with
$k$-modems in the presence of a set $\mathcal{L}$ of $n$  disjoint orthogonal segments. The 
number of $k$-modems used by our algorithm is at most $6\frac{n}{k}+1$. 

We assume that $\mathcal{L}$ is contained in a rectangle $R$. Our objective is to partition $R$ 
into a certain kind of polygons called staircases, in a similar fashion to the cuttings introduced 
in Section 
\ref{SeccKmodemCutting}. We do this in such a way so that each staircase is illuminable with one 
$k$-modem. 
A \emph{staircase} is an orthogonal polygon $P$ such that: $P$ is bounded from below by a single 
horizontal segment \emph{Floor(P)}; $P$ is bounded from the right by a single 
vertical segment \emph{Rise(P)}; the left endpoint of $Floor(P)$ and the upper endpoint of 
$Rise(P)$ are joined by a monotone polygonal chain \emph{Steps(P)}. See Figure 
\ref{fig:staircase}. In what follows 
let $P$ be a 
staircase.

\begin{figure}[ht]
	\centering
	\includegraphics[scale=.34]{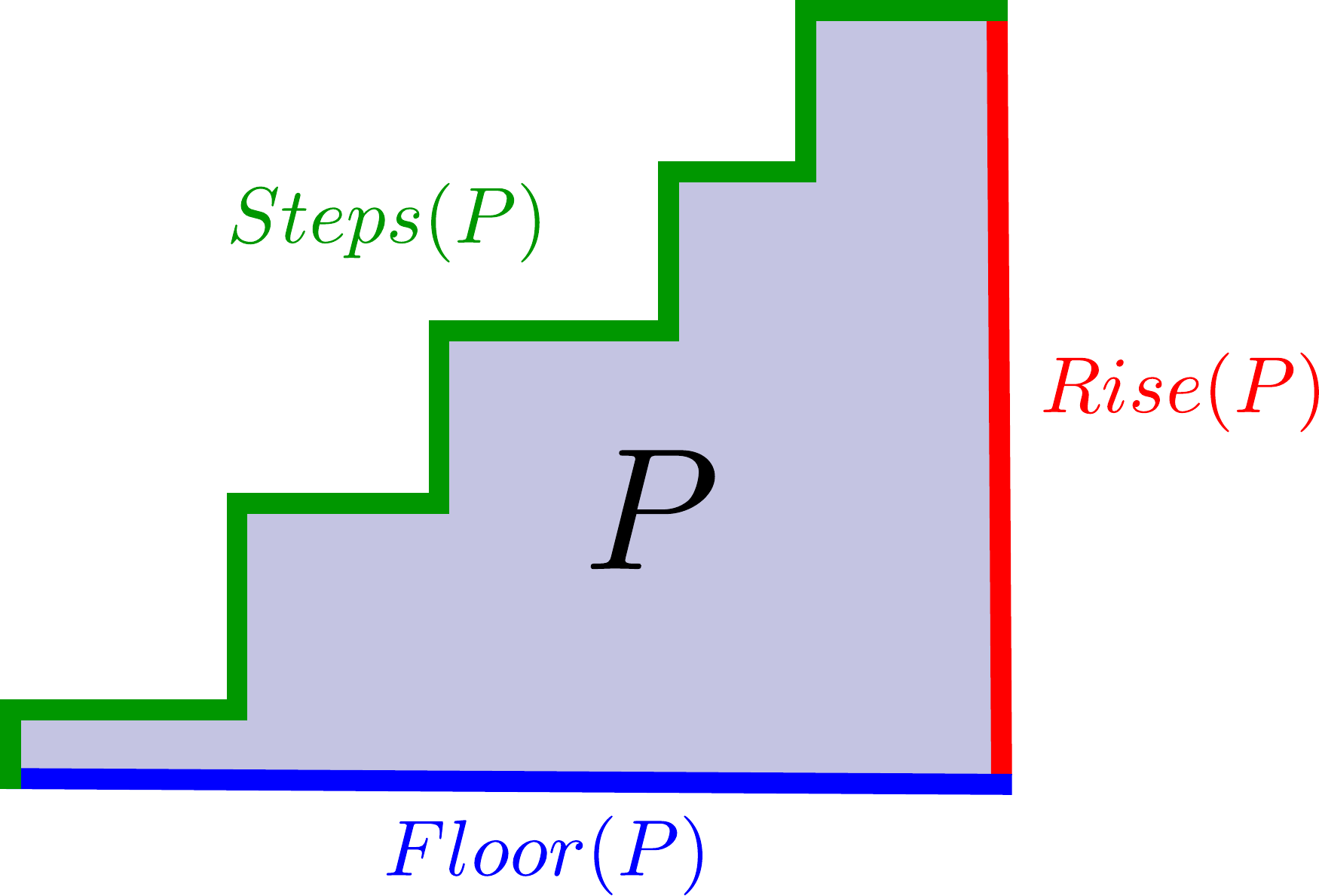}
	\caption{An example of a staircase.}
	\label{fig:staircase}
\end{figure}

Note that an axis parallel line intersecting the interior of $P$ splits it into two parts which are 
also staircases. Let $l$ be a horizontal line that intersects the interior of $P$ in a segment $s$.
We denote by $Above(P,l)$ the staircase formed by $s$ and the part of $P$ above $s$, and denote 
by $Below(P,l)$ the staircase formed by $s$ and the part of $P$ below $s$.
Likewise, let $l'$ be a vertical line that intersects the interior of $P$ in a segment $s$.
We denote by $Left(P,l')$ the staircase formed by $s$ and the part of $P$ to the left of $s$, and 
denote by
$Right(P,l')$ the 
staircase formed by $s$ and the part of $P$ to the right of $s$. See Figure \ref{fig:splits}.

\begin{figure}[ht] 
	\centering
	\includegraphics[scale=.55]{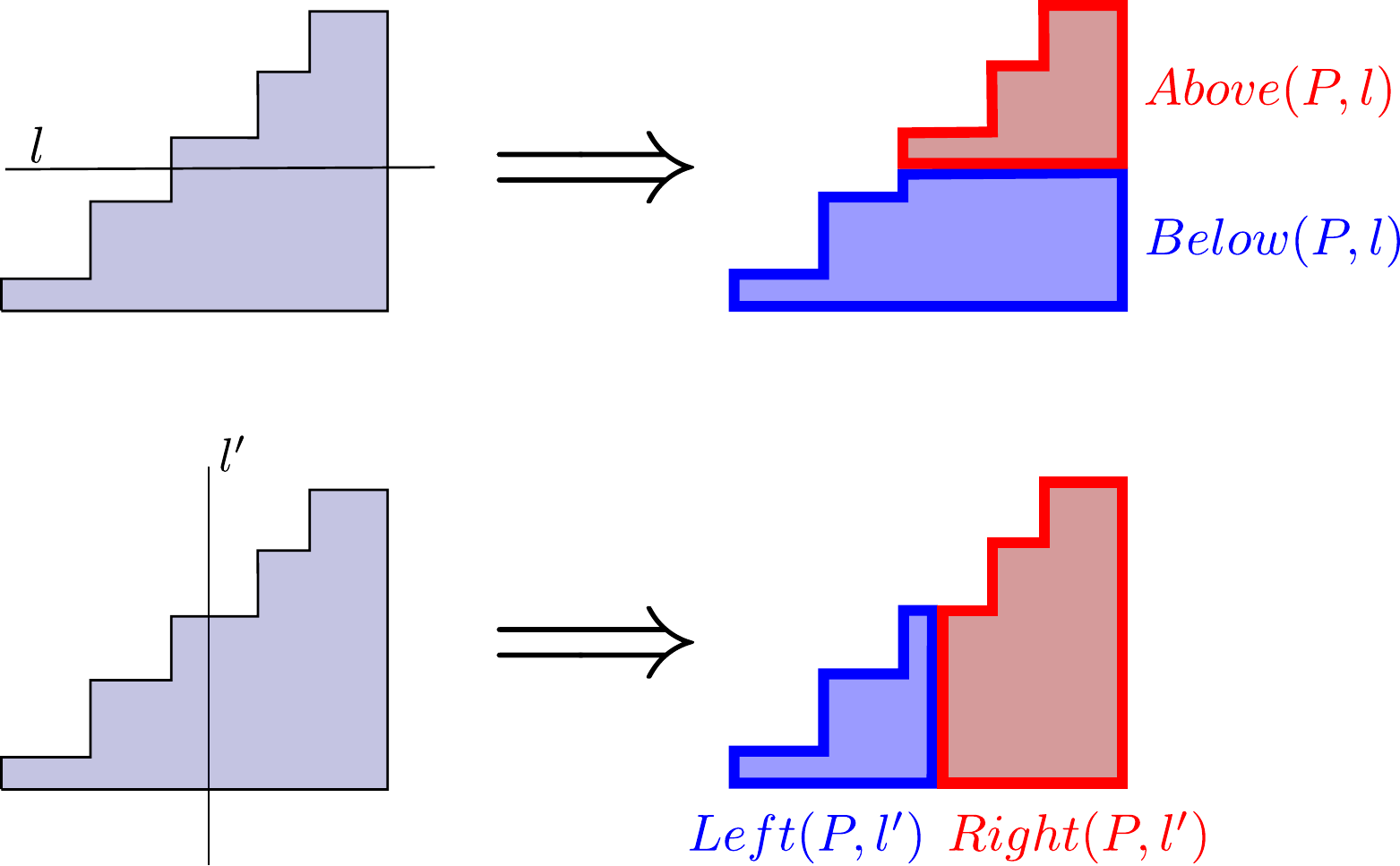}
	\caption{An example of $Above(P,l)$, $Below(P,l)$, $Left(P,l')$ and $Right(P,l')$.}
	\label{fig:splits}
\end{figure}

\subsection{\label{sub:Partition-scheme-pseudocode}Illumination Algorithm}

Let $\mathcal{P}$ be the set of endpoints of the segments in $\mathcal{L}$, excepting the 
rightmost points of the horizontal segments. The algorithm 
\emph{StaircasePartition($\mathcal{L})$} finds a rectangle $R$ that contains $\mathcal{L}$ 
and produces as output a partition of $R$ into staircases.
Initially, $R$ is the only staircase in the 
partition, we use a horizontal sweep line $l$ starting at the top of $R$ that stops at each point 
of 
$\mathcal{P}$ and checks whether it is necessary to refine the partition.

Throughout the algorithm, the staircases that are not intersected by the sweep 
line need no further processing.  
The staircases that are intersected by the sweep line might be splitted later. 
There are two reasons to split a staircase: the 
number of segments of $\mathcal{L}$ that intersect it might be too high, or a horizontal segment of 
$\mathcal{L}$ might cross the staircase completely. These two cases are handled by procedures 
called \emph{OverflowCut} and \emph{CrossingCut} respectively. \emph{OverflowCut} ensures that no 
staircase is intersected by more than $k$ segments; \emph{CrossingCut} ensures that no horizontal 
segment in $\mathcal{L}$ crosses completely a staircase.
 
\emph{OverflowCut} is called whenever $k$ segments intersect the interior of a staircase $P$. 
\emph{OverflowCut} starts by replacing $P$ by $Above(P,l)$. If at most $\left \lfloor 
\frac{k}{2}\right \rfloor$ segments intersect both $P$ and $l$, 
$Below(P,l)$ is added to the partition. If more than $\left \lfloor 
\frac{k}{2}\right \rfloor$ segments intersect both $P$ and $l$, we search for a vertical line $l'$ 
that 
leaves at 
most $\left \lfloor 
\frac{k}{2}\right \rfloor$ of those segments to each side. The staircases 
$Left(Below(P,l), l')$ and $Right(Below(P,l),l')$ are added to the partition. See Figure 
\ref{fig:Overflow}.

\begin{figure}[ht] 
	\centering
	\includegraphics[scale=.45]{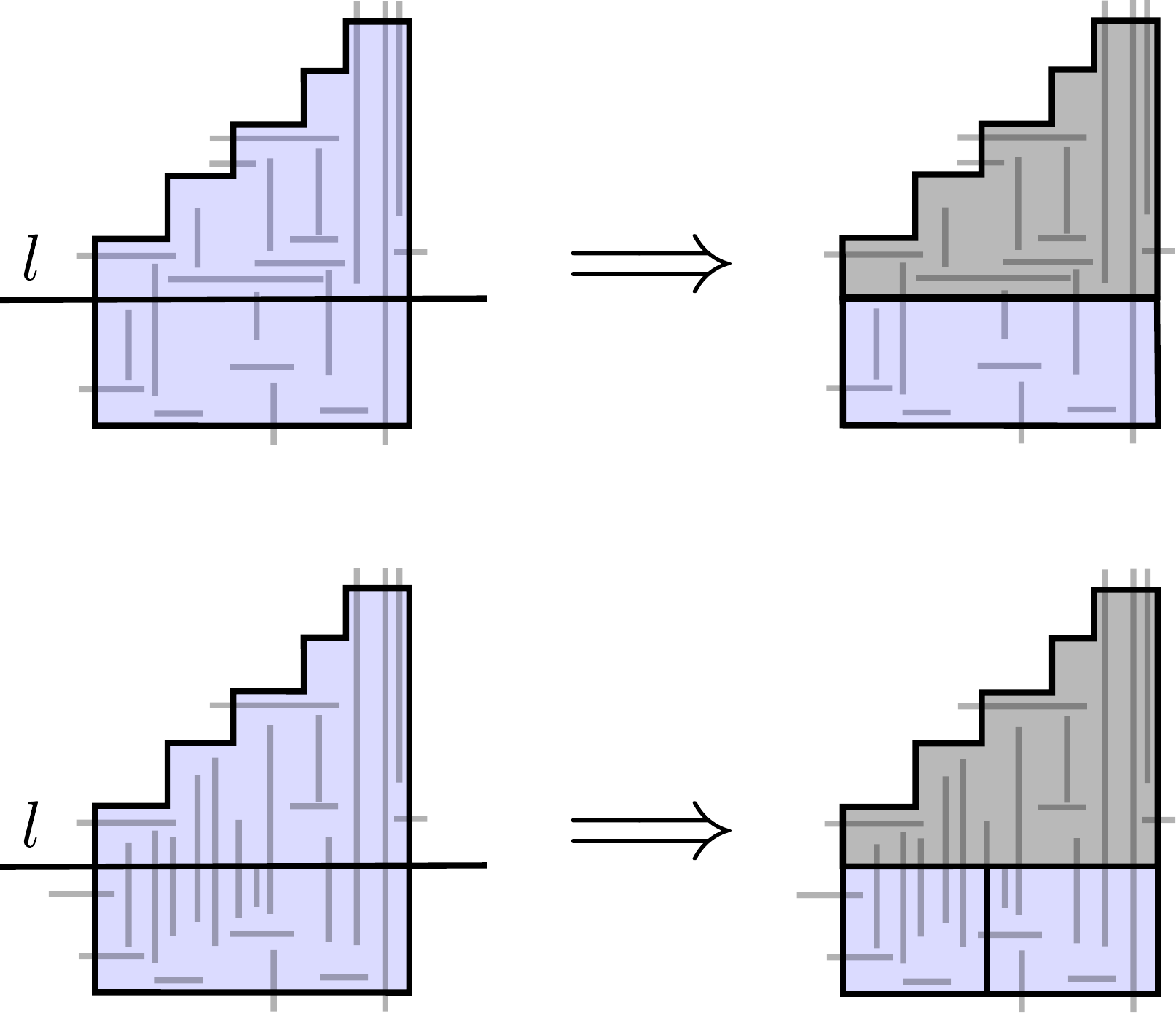}
	\caption{The two cases for an OverflowCut on a staircase with $k=18$.}
	\label{fig:Overflow}
\end{figure}

\emph{CrossingCut} is called whenever a horizontal segment $s$ of $\mathcal{L}$ intersects at least 
three staircases of the partition. Let $P_1, \ldots , P_m$ be these staircases ordered from left to 
right. \emph{CrossingCut} replaces each $P_i$ by $Above(P_i, l)$ for $2 \leq i \leq  m-1$, and 
merges $Below(P_i,l)$ with $P_m$. Note that the number of staircases in the partition does not 
grow. See Figure \ref{fig:Crossing}.

\begin{figure}[ht] 
	\centering
	\includegraphics[scale=.35]{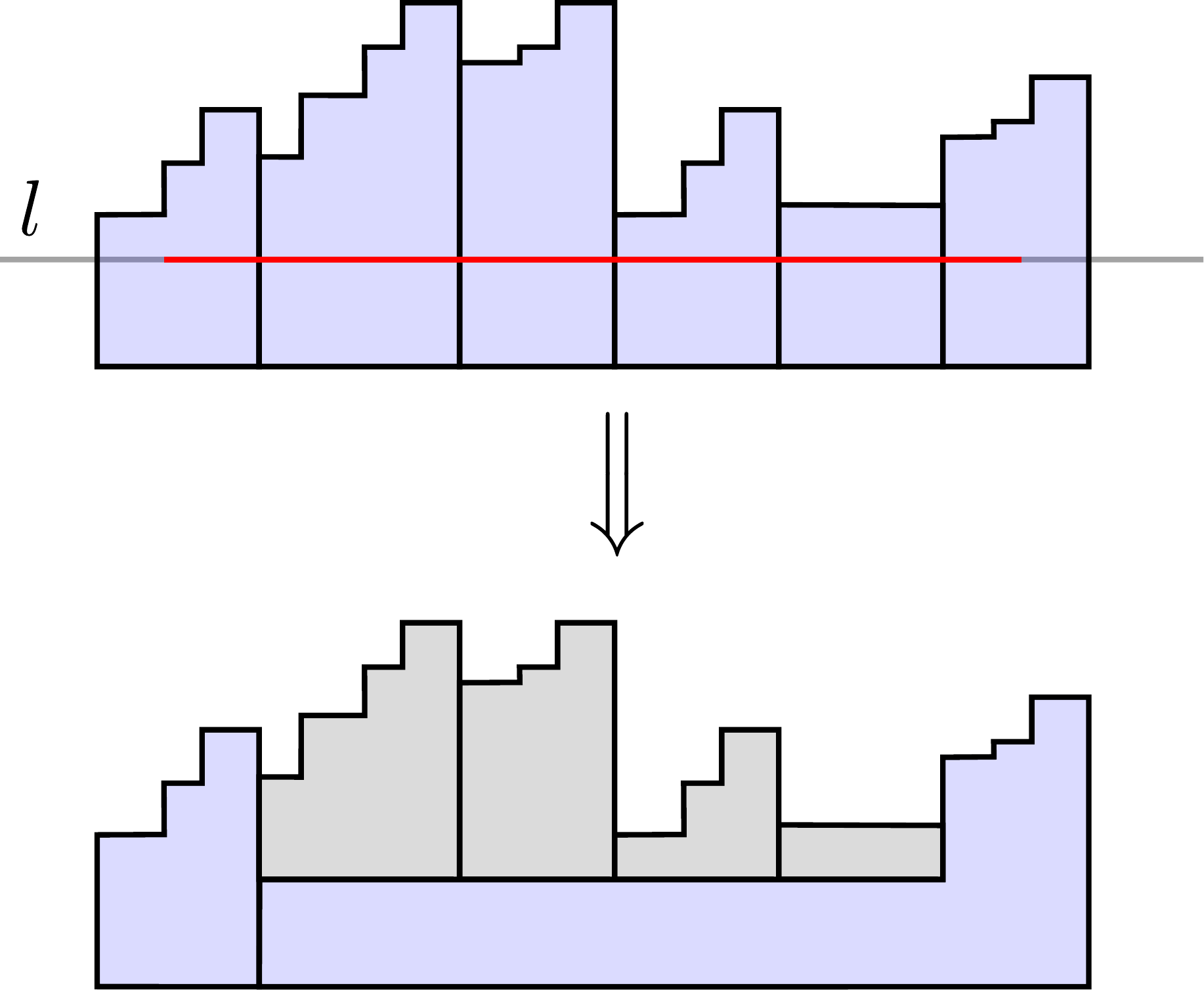}
	\caption{A CrossingCut.}
	\label{fig:Crossing}
\end{figure}

The algorithm begins by sorting the points in $\mathcal{P}$ by $y$ coordinate, which will be the 
stop points for the sweep line. After initializing the partition with $R$, the stop 
points are processed according to three cases. (As part of the steps taken in each case, we 
maintain the 
set of segments that intersect each staircase in the partition sorted from left to right, making a 
distinction of the 
ones that also intersect $l$.) 

\begin{itemize}

\item The sweep line stops at an upper endpoint. This means we have found a new vertical segment 
that 
intersects a staircase $P$. That staircase may now be intersected by $k$ segments, if this happens 
we make an \emph{OverflowCut}.

\item The sweep line stops at a lower endpoint. This means we have 
found the end of a segment that intersects $P$. In this case we only update the set of $P$.

\item The sweep line stops at a left endpoint $p$ of a segment $s$. If the number 
of staircases that 
$s$ intersects is at least three, we perform a \emph{CrossingCut} on them. 
If necessary, an \emph{OverflowCut} is made on the staircases that contain the 
endpoints of $s$.

\end{itemize}

At the end of the algorithm, the interior of each staircase $P$ in the partition 
is intersected by at most $k$ segments. Thus, a $k$-modem 
placed in the intersection between \emph{Floor(P)} and \emph{Rise(P)} is enough to illuminate it.

\begin{lem} \label{lem:num-modems}
  Let $\mathcal{L}$ be a set of $n$ disjoint orthogonal segments. Let $\mathcal{P}'$ be the set of 
endpoints of the segments in $\mathcal{L}$, excepting the 
lower endpoints of the vertical segments. Then, the total number of calls 
to \textbf{OverflowCut} made by \textbf{StaircasePartition($\mathcal{L}$)} is at most 
$|\mathcal{P}'|/ \left \lceil k/2 \right \rceil$.
\end{lem}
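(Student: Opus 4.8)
The plan is to bound the number of OverflowCut calls by a charging argument: each call must be "paid for" by some subset of the points in $\mathcal{P}'$, and I must show that the sets charged by distinct calls are disjoint and each has size at least $\lceil k/2 \rceil$.

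Let me sketch the key steps. First I would observe what triggers an OverflowCut: it is called precisely when the interior of a staircase $P$ accumulates $k$ intersecting segments. The crucial claim is that immediately \emph{after} an OverflowCut, each of the resulting staircases is intersected by at most $\lfloor k/2 \rfloor$ segments that also meet the sweep line $l$. Indeed, in the first case the $Below$ piece retains at most $\lfloor k/2 \rfloor$ of the crossing segments; in the second case the vertical split $l'$ is chosen exactly so that each of the two pieces retains at most $\lfloor k/2 \rfloor$ crossing segments. The $Above$ piece is removed from further consideration since it lies above $l$ and will never be intersected by a new segment. So each OverflowCut leaves every \emph{active} staircase with at most $\lfloor k/2 \rfloor$ segments crossing $l$.

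Next I would track how a staircase can go from $\le \lfloor k/2 \rfloor$ intersecting segments to the $k$ segments needed to trigger the \emph{next} OverflowCut on it (or its descendants). The count of segments intersecting an active staircase increases only when the sweep line stops at an upper endpoint of a vertical segment (a new vertical segment begins intersecting the staircase) and, on a left-endpoint stop, when a horizontal segment newly enters. These stop events correspond exactly to points of $\mathcal{P}'$: the upper endpoints of vertical segments and the left endpoints of horizontal segments are precisely the points retained in $\mathcal{P}'$ (lower endpoints of verticals are excluded, and one endpoint of each horizontal is excluded as in the definition of $\mathcal{P}$). To go from at most $\lfloor k/2 \rfloor$ up to $k$ requires the arrival of at least $k - \lfloor k/2 \rfloor = \lceil k/2 \rceil$ new segments, each billed to a distinct point of $\mathcal{P}'$. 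I would charge these $\lceil k/2 \rceil$ points to the triggered OverflowCut.

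The main obstacle is showing that the charged point-sets for different OverflowCut calls are disjoint, so that the counts genuinely add up to at most $|\mathcal{P}'|$. The delicate issue is that a single segment-arrival event (one point of $\mathcal{P}'$) might be relevant to several staircases straddled by the sweep line, and that staircases are created, split, and merged (by CrossingCut) over the course of the algorithm. To handle this I would argue that each point of $\mathcal{P}'$ is processed at a single sweep stop and, at that moment, contributes a new intersection to at most one of the currently active staircases in the charging scheme — a segment crossing the sweep line lies in a well-defined active staircase of the current partition. Since after any OverflowCut the relevant piece restarts from at most $\lfloor k/2 \rfloor$ crossings, the $\lceil k/2 \rceil$ arrivals charged to one call are strictly consumed and cannot be reused for a later call on a descendant staircase. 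Summing over all calls, the disjoint charges give at most $|\mathcal{P}'| / \lceil k/2 \rceil$ OverflowCut calls, as claimed.
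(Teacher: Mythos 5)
Your overall strategy---a charging argument in which each \emph{OverflowCut} is paid for by $\lceil k/2 \rceil$ points of $\mathcal{P}'$, with pairwise disjoint charge sets---is sound in outline, and your count $k-\lfloor k/2\rfloor=\lceil k/2\rceil$ is the same quantity the paper uses. However, the step on which your disjointness claim rests is false, and the error traces back to a mischaracterization of $\mathcal{P}'$. You assert that $\mathcal{P}'$ consists of the upper endpoints of the vertical segments and the \emph{left} endpoints of the horizontal segments; in fact $\mathcal{P}'$ excludes only the lower endpoints of the vertical segments, so it contains \emph{both} endpoints of every horizontal segment (this is precisely why $|\mathcal{P}'|$ can be as large as $3n/2$ rather than $n$, as used in the theorem following the lemma). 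This matters because your claim that, at a single sweep stop, a point of $\mathcal{P}'$ ``contributes a new intersection to at most one of the currently active staircases'' is wrong for horizontal segments: a horizontal segment lies \emph{along} the sweep line, and at the stop at its left endpoint it newly intersects several active staircases---even after a \emph{CrossingCut}, still two of them (the staircase containing its left endpoint and the merged staircase containing its right endpoint). Hence a staircase can accumulate intersecting segments through arrivals whose event points lie in a \emph{different} staircase. Under your scheme such arrivals either go unpaid (so an \emph{OverflowCut} may be triggered with fewer than $\lceil k/2\rceil$ charged points) or must be charged to points that another \emph{OverflowCut} also charges; either way the disjointness breaks and the stated bound does not follow. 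The repair is exactly what the correct definition of $\mathcal{P}'$ provides: charge the entry of a horizontal segment into the staircase containing its left endpoint to that left endpoint, and its entry into the staircase containing its right endpoint to that right endpoint.

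It is also worth seeing how the paper sidesteps the dynamic bookkeeping you flagged as delicate (splits, merges, inheritance of unspent charges). Rather than tracking arrivals over time, it charges each \emph{OverflowCut} on a staircase $P$ to the points of $\mathcal{P}'$ geometrically contained in $Above(P,l)$, which is frozen at that moment and never modified again: the $k$ segments meeting $P$ all meet $Above(P,l)$; at most $\lfloor k/2\rfloor$ of them cross it completely (these are vertical segments entering through $Steps(P)$, bounded via the invariant that \emph{CrossingCut} only adds edges lying on segments of $\mathcal{L}$, which no other segment can cross); and each of the remaining at least $\lceil k/2\rceil$ segments leaves an endpoint belonging to $\mathcal{P}'$ inside $Above(P,l)$---the upper endpoint for a vertical segment, and either endpoint for a horizontal one, which again requires both endpoints of horizontals to be in $\mathcal{P}'$. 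Disjointness of the charges is then automatic, because the frozen pieces have pairwise disjoint interiors. If you patch your temporal argument as indicated above, you will essentially be re-deriving this static one.
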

\begin{proof}
Let $P$ be a staircase in the partition that \emph{StaircasePartition($\mathcal{L})$} returns. 
$P$ was created after executing an \emph{OverflowCut}, so $Steps(P)$ was initially intersected by 
at most 
$\left \lfloor k/2 \right \rfloor$ vertical segments. Further modifications to $P$ are done 
by \emph{CrossingCuts}, which add edges to $Steps(P)$. The 
horizontal edges added to $Steps(P)$ are part of segments in $\mathcal{L}$, so they can't be 
intersected by any other segment. Thus, at most $\left \lfloor k/2 \right \rfloor$ vertical 
segments of 
$\mathcal{L}$ cross completely $P$.


If \emph{OverflowCut} is called on $P$, $Above(P,l)$ becomes a staircase of 
the partition and won't be modified anymore. \emph{OverflowCut} was called because $k$ segments 
from $\mathcal{L}$ intersected $P$. These same segments intersect also
$Above(P,l)$; of these at most $\left \lfloor k/2 \right \rfloor$ cross it completely. Therefore, 
at least $\left \lceil k/2 
\right \rceil$ points of $\mathcal{P}'$ are contained in $Above(P,l)$. Thus, the total number of 
\emph{OverflowCut} calls is at most $|\mathcal{P}'|/ \left \lceil k/2 \right \rceil$.

\end{proof}

\begin{thm}
 Let $\mathcal{L}$ be a set of $n$ orthogonal disjoint segments. Then the number of $k$-modems 
needed to illuminate the plane in the presence of $\mathcal{L}$ is at most $6\frac{n}{k}+1$. The 
locations for 
those modems can be found in $O(n \log n)$ time.
\end{thm}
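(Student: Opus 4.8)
The plan is to reduce the theorem to bounding the number of staircases output by \emph{StaircasePartition}, and then to analyse the sweep directly for the time bound. Since each staircase $P$ of the final partition is met by at most $k$ segments, it is illuminated by one $k$-modem placed at the corner shared by its floor and its rise, so the modems covering the interior of $R$ are in bijection with the output staircases. The region outside $R$ is free of segments and is absorbed by one extra modem (or by leaving the outermost staircases unbounded), which is the additive $+1$; hence it suffices to prove the partition has at most $6n/k$ staircases.

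To bound the staircases I would track how the refinement procedures change the partition, which starts as the single staircase $R$. A \emph{CrossingCut} replaces each middle $P_i$ by $Above(P_i,l)$ and merges the lower parts into $P_m$, so it never increases the count and may be dropped from the upper bound. An \emph{OverflowCut} increases the count by at most two, since it splits $P$ into $Above(P,l)$ and $Below(P,l)$ and, in its second case, splits $Below(P,l)$ once more by the vertical line $l'$. Writing $c_1$ and $c_2$ for the number of first- and second-case calls, the staircase total is $1+c_1+2c_2$. Lemma~\ref{lem:num-modems} bounds the number of calls by $c_1+c_2\le|\mathcal{P}'|/\lceil k/2\rceil\le 4n/k$ (using $|\mathcal{P}'|\le 2n$ and $\lceil k/2\rceil\ge k/2$), which already yields the clean bound $1+2(c_1+c_2)\le 8n/k+1$.

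The remaining work is to replace the wasteful factor two on $c_2$ by a sharper estimate and bring the constant down to $6$. The idea is to pay for the vertical splits from a second, disjoint budget: a second-case call occurs only when more than $\lfloor k/2\rfloor$ vertical segments are simultaneously crossing the sweep line inside $P$, and these segments have lower endpoints (the points of $\mathcal{P}\setminus\mathcal{P}'$, of which there are at most $n$) that are never charged by the Lemma's argument. Charging each second-case call to $\lceil k/2\rceil$ of these endpoints would give $c_2\le 2n/k$ and hence the total $1+(c_1+c_2)+c_2\le 6n/k+1$. The main obstacle is making this charge injective: because a vertical segment keeps crossing the sweep line along its whole extent, its lower endpoint could a priori be blamed for several second-case calls in the interior-disjoint staircases that descend from $P$, and the proof must argue that the line $l'$ separates the crossing segments so as to prevent this double counting.

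For the running time I would sort the $O(n)$ points of $\mathcal{P}$ by $y$-coordinate in $O(n\log n)$ time and run the sweep, keeping for each active staircase the left-to-right ordered set of segments crossing it in a balanced search tree, with a flag on those also meeting the sweep line. Each of the $O(n)$ events updates one tree in $O(\log n)$; a second-case \emph{OverflowCut} locates $l'$ by an order-statistics query and splits the tree, and a \emph{CrossingCut} performs splits and merges, all in $O(\log n)$ each. Because the total number of staircases ever finalized is $O(n/k)$ and each horizontal segment triggers at most one \emph{CrossingCut}, the number of tree operations is $O(n)$, so their total cost is $O(n\log n)$, which dominates the sort and gives the claimed time bound.
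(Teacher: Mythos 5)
Your reduction to counting staircases, your accounting of \emph{OverflowCut} and \emph{CrossingCut}, and your time analysis all match the paper's proof. But your proposal does not actually establish the bound $6\frac{n}{k}+1$: with $|\mathcal{P}'|\le 2n$ you rigorously get only $8\frac{n}{k}+1$, and the charging scheme you sketch to recover the constant $6$ (paying for second-case calls out of the lower endpoints of vertical segments) is left with an acknowledged hole --- you say yourself that you cannot yet make the charge injective, since a vertical segment keeps crossing the sweep line along its whole extent and its lower endpoint could be blamed for several second-case calls in descendant staircases. An admitted obstacle of this kind is a genuine gap, not a detail.

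The idea you are missing is much simpler and is the first line of the paper's proof: rotate the plane, if necessary, so that at least $n/2$ of the segments are vertical. Since $\mathcal{P}'$ excludes exactly the lower endpoints of the vertical segments, this immediately gives $|\mathcal{P}'| \le 2n - n/2 = 3n/2$. Then Lemma \ref{lem:num-modems} bounds the number of \emph{OverflowCut} calls by $|\mathcal{P}'|/\left\lceil k/2\right\rceil \le 3n/(k/2)\cdot\frac{1}{2} \cdot 2 = 3n/k \cdot \frac{k/2}{\lceil k/2\rceil}\le 3n/k$, and since each call adds at most two staircases to the partition (which starts with the single staircase $R$, accounting for the $+1$), the partition has at most $2\cdot 3\frac{n}{k}+1 = 6\frac{n}{k}+1$ staircases. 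No refined charging of first- versus second-case calls is needed: the paper uniformly charges two new staircases per call and gets the constant $6$ entirely from the improved bound on $|\mathcal{P}'|$. Your time analysis is essentially the paper's (the key point, which you do state, is that the $m-2$ middle staircases of a \emph{CrossingCut} are finalized, so the total work of all \emph{CrossingCut}s is proportional to the $O(n/k)$ final staircases), so once the counting gap is repaired by the rotation argument, the rest of your proof goes through.
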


\begin{proof}
 We can assume that the number of vertical segments in 
$\mathcal{L}$ is at least $n/2$, otherwise we can rotate the plane. Let $\mathcal{P}'$ 
be the set of 
endpoints of the segments in $\mathcal{L}$, excepting the 
lower endpoints of the vertical segments. Therefore, $|\mathcal{P}'| \le 3n/2$. 
\emph{StaircasePartition($\mathcal{L}$)} adds staircases to the partition 
only when it makes a call to \emph{OverflowCut}. The staircases added are at most two. Using Lemma 
\ref{lem:num-modems}, we obtain 
that there are at most $2|\mathcal{P}'|/ \left \lceil k/2 \right \rceil + 1 \le 6\frac{n}{k}+1$ 
staircases in the partition. Since each staircase is illuminable with one $k$-modem, the bound 
follows.

It remains to prove the $O(n \log n)$ bound on the time to find the locations of the $k$-modems. 

For each staircase 
$P$ in the partition,  we maintain the horizontal segments of $Steps(P)$ sorted  from 
left to right and the vertical segments of $Steps(P)$ sorted from bottom to top.
Thus it is possible to find $Above(P,l)$ and $Below(P,l)$ in $O(\log n)$ time; the same happens 
with 
$Left(P,l')$ and $Right(P,l')$. An \emph{OverflowCut} call makes at most two splittings, so it 
takes at most $O(\log n)$ time.
Since the number of \emph{OverflowCut} calls is $O(n/k)$, the total time required by them is 
$O(\frac{n}{k} \log n)$.

Given $m$ staircases $P_1, \ldots , P_m$ intersected by a segment $s$, the splits and merges done 
by a \emph{CrossingCut} can be achieved in $(m-2)O(\log n)$ time. The $m-2$ upper parts of the 
staircases splitted horizontally become staircases that won't be modified again. Since there 
are at most $O(n/k)$ staircases, at 
most $O(n/k)$ splits and merges from \emph{CrossingCut} are done throughout the algorithm. Since 
each pair of split and 
merge operations takes $O(\log n)$ time, the total 
time required by the \emph{CrossingCut} calls is $O(\frac{n}{k} \log n)$. 

Besides the cost of the calls to \emph{OverflowCut} and \emph{CrossingCut}, at every stop point, 
we must determine the staircase where it is located. This can be done in $O(\log n)$ time by 
maintaining the order from left to right in which the sweep line intersects the 
staircases. Thus, 
the running time for 
the algorithm is $O(n \log n + \frac{n}{k} \log n)$.
\end{proof}

\section{Aknowledgements}
We would like to thank Ruy Fabila-Monroy for introducing us to the $k$-modem problem 
and for his invaluable help in the development of this paper.

\bibliographystyle{plain}
\bibliography{kmodems}

\end{document}